\documentclass[a4paper]{article}
\usepackage{graphicx}
\usepackage{amsmath,amsfonts}
\usepackage{amssymb}
\usepackage{color}
\usepackage{algorithm}
\usepackage{algpseudocode}
\usepackage{booktabs}
\usepackage{url}
\usepackage[subrefformat=parens]{subcaption}
\newenvironment{proof}[1][Proof]{\textbf{#1.} }{\hfill$\square$}
\newtheorem{theorem}{Theorem}
\newtheorem{prob}{Problem}

\newtheorem{remark}{Remark}
\newtheorem{ass}{Assumption}

\newcommand{\A}{{\mathcal A}}
\newcommand{\B}{{\mathcal B}}

\newcommand{\C}{{\mathcal C}}
\newcommand{\D}{{\mathcal D}}

\begin{document}

%\begin{frontmatter}

\title{Closed-Loop Identification of Periodically Time-Varying Systems via Cyclic Reformulation}%\thanksref{footnoteinfo}}

%\thanks[footnoteinfo]{This paper was not presented at any IFAC meeting. Corresponding author H.~Okajima.}

\author{Hiroshi Okajima}%\ead{okajima@cs.kumamoto-u.ac.jp}

%\address[Kumamoto]{Faculty of Advanced Science and Technology, Kumamoto University, Japan}

%\begin{keyword}
%Closed-loop identification; Cyclic reformulation; Joint input-output method; Periodically time-varying systems; Subspace identification.
%\end{keyword}

\begin{abstract}
This paper studies closed-loop identification of linear periodically
time-varying (LPTV) plants, with emphasis on open-loop unstable plants
for which open-loop experiments are not practically available. The
central contribution is an exact algebraic plant-extraction theorem for
cycled closed-loop realizations: for square strictly proper plants and a controller path satisfying an invertibility condition, the cycled plant transfer matrix is recovered from a shared state-space realization of the stable closed-loop maps from the external reference to the plant output and to the control input, without state augmentation, and without requiring the recovered plant realization to be stable. Thus, the stability requirement for data generation is shifted from the open-loop plant to the internally stable closed-loop system. Building on this result, a closed-loop identification algorithm is constructed that takes the reference, output, and input signals as data, applies standard subspace identification to the cycled signals, performs the algebraic plant extraction, and recovers the LPTV plant state-space parameters via a coordinate transformation; the conditioning of the inverse controller path governs the reliability of the extraction step. Numerical examples demonstrate the recovery of stable and open-loop unstable SISO LPTV plants and validate a MIMO case through coordinate-invariant Markov-parameter comparisons.
\end{abstract}

%\end{frontmatter}

\maketitle

%================================================================================
\section{Introduction}\label{sec:intro}
%================================================================================
System identification is a fundamental step in constructing mathematical models of dynamic systems for control system design \cite{id00,id01}. Subspace identification methods \cite{n4sid,moesp,katayama_book,overschee_book} have gained significant attention for state-space model identification of linear time-invariant (LTI) systems, owing to their numerical stability and natural applicability to multi-input multi-output (MIMO) systems.

LPTV systems arise naturally in various engineering domains, including multirate sampled-data systems \cite{cyc5,P2}, helicopter rotor dynamics, satellite and space tether attitude control, periodic disturbance rejection, and multi-agent systems with periodically varying networks \cite{P3} (see \cite{cyc2} for a comprehensive treatment). Identification methods for LPTV systems have been studied from several perspectives. Representative state-space approaches include ensemble methods using multiple datasets \cite{id0}, periodic-input methods for time-varying FIR or state-space models \cite{id1,id2,id11tac}, continuous-time identification of periodically parameter-varying models \cite{id3}, and lifting-based methods for autonomous periodic systems \cite{real}. The cyclic-reformulation-based approach of \cite{okajima_access2025} applies subspace identification to cycled open-loop signals to obtain an equivalent LTI state-space model, and then exploits structural properties of the Markov parameters to recover the original LPTV parameters via a state coordinate transformation. A related but distinct line of research treats linear parameter-varying (LPV) systems with periodic scheduling, including subspace identification of MIMO LPV systems~\cite{id11auto} and a unified LPV subspace identification framework that covers both open-loop and closed-loop settings~\cite{cox2021auto}. The LPV setting differs from the LPTV case in that parameter variations are driven by an external scheduling signal rather than by time periodicity, and the structural properties exploited in LPTV identification do not carry over directly.

This paper addresses the identification of open-loop unstable LPTV systems under closed-loop operation. The existing LPTV identification methods reviewed above are primarily formulated for open-loop input-output data, and their direct use is therefore not applicable to open-loop unstable plants. Such plants arise in practical applications such as spacecraft attitude control and certain rotating machinery, where the plant output diverges without feedback and closed-loop data are therefore indispensable. For LTI systems, closed-loop identification has a long history covering frameworks applicable to unstable plants, with early foundations~\cite{verhaegen1993} and subsequent surveys on the interplay between identification and control~\cite{vandenhof_schrama1995,forssell_ljung1999}. Among various closed-loop identification philosophies, the joint input-output approach of Katayama, Kawauchi, and Picci~\cite{katayama_kawauchi_picci2005} and related treatments and extensions~\cite{huang_kadali2008,oku_closedloop} identify the stable closed-loop map from the external reference to the joint signal $[y^{\top}\;u^{\top}]^{\top}$, and the consistency of closed-loop subspace methods under feedback has been analyzed in~\cite{chiuso_picci2005,chiuso2007auto}; closed-loop subspace identification of LTI systems including unstable plants remains an active area of research (see, e.g., \cite{id9}). However, a realization-theoretic route that starts from closed-loop data and explicitly recovers a finite-dimensional state-space LPTV plant model appears to remain underdeveloped. This gap is particularly relevant for open-loop unstable LPTV plants, for which open-loop experiments are not a viable data-generation mechanism.

To address this gap, this paper develops a closed-loop identification framework for LPTV systems. A closely related line of work on LPTV or multirate dynamics under feedback is the lifted frequency-domain identification of closed-loop multirate systems by van~Haren et al.~\cite{vanharen2025mechatronics}. That method targets nonparametric frequency-response estimation of lifted multirate systems, whereas the present paper recovers a finite-dimensional state-space LPTV plant model. By combining the joint input-output approach~\cite{katayama_kawauchi_picci2005,huang_kadali2008,oku_closedloop} with cyclic reformulation~\cite{cyc1,cyc2}, this paper separates the exact realization-theoretic extraction result from its finite-sample implementation by subspace identification, exploiting that the joint approach yields the two transfer matrices needed for plant extraction in a shared realization. The contributions are: (i) an exact algebraic theorem that recovers $\check{P}(z)=\check T_{yr}(z)\check T_{ur}(z)^{-1}$ from a shared-$A$ realization without state augmentation; (ii) validity of this extraction even when the recovered plant realization is unstable; (iii) explicit separation of the assumptions for data generation, identification, and algebraic inversion; and (iv) numerical validation for stable, open-loop unstable, and MIMO LPTV examples.

This paper is organized as follows. Section~\ref{sec:prelim} defines the LPTV plant and controller models and reviews the cyclic reformulation for a general LPTV system. Section~\ref{sec:closedloop} derives the closed-loop augmented system in the joint input-output framework. Section~\ref{sec:cyclic_cl} applies cyclic reformulation to the augmented closed-loop system. Section~\ref{sec:identification} describes the subspace identification procedure, the plant extraction method, and the recovery of the LPTV plant parameters. Section~\ref{sec:simulation} presents numerical simulation results. Section~\ref{sec:conclusion} concludes the paper.

\textit{Notation:}
$I_n$ denotes the $n\times n$ identity matrix and $O_{n,m}$ the $n\times m$ zero matrix. The accent $\check{\cdot}$ marks signals and system matrices associated with cyclic reformulation, defined in Section~\ref{sec:cyclic_review}. Calligraphic letters such as $\A,\B,\C,\C_y,\C_u$ denote quantities obtained from the identified closed-loop realization, in contrast to the true cycled matrices marked by $\check{\cdot}$. For an $M$-periodic matrix sequence $\{A_k\}$, the monodromy matrix is defined as $\Phi := A_{M-1} A_{M-2} \cdots A_{0}$, and a subscript is used to indicate the underlying sequence (e.g., $\Phi_p$ for the plant and $\Phi_{cl}$ for the closed-loop system). The Frobenius norm is denoted by $\|\cdot\|_F$.

%================================================================================
\section{Preliminaries}\label{sec:prelim}
%================================================================================

\subsection{Periodically time-varying plant}\label{sec:plant}

An $n_p$-th order discrete-time LPTV plant $P$ with period $M$ is described by 
\begin{eqnarray}
x_p(k+1)&=&A_{p,k}\, x_p(k)+B_{p,k}\, u(k) + B_{w,k}\, w(k)\label{eq:plant_state}\\
y(k) &=& C_{p,k}\, x_p(k) + D_{p,k}\, u(k) + v(k)\label{eq:plant_output}
\end{eqnarray}
where $u(k)\in \mathbb{R}^{m}$ is the control input, $x_p(k)\in \mathbb{R}^{n_p}$ is the plant state, $y(k)\in \mathbb{R}^{l}$ is the measured output, $w(k)\in \mathbb{R}^{m_w}$ is the process disturbance, and $v(k)\in \mathbb{R}^{l}$ is the measurement noise. The system matrices satisfy $A_{p,k} = A_{p,k+M}$, $B_{p,k} = B_{p,k+M}$, $B_{w,k} = B_{w,k+M}$, $C_{p,k} = C_{p,k+M}$, and $D_{p,k} = D_{p,k+M}$ for all $k$. By this periodicity, all system matrices are uniquely determined by their values at $k = 0, 1, \ldots, M-1$, and throughout this paper the subscript $k$ in a periodic matrix denotes the representative index $k\,\mathrm{mod}\,M$ unless otherwise stated. The pair $(C_{p,k}, A_{p,k})$ is assumed to be observable and the pair $(A_{p,k}, B_{p,k})$ is assumed to be controllable in the sense of periodic systems \cite{cyc2}.

\begin{ass}\label{ass:Dp_zero}
The plant is square ($m = l$) and has no direct feedthrough, i.e., $D_{p,k} = O_{l,m}$ for all $k$.
\end{ass}
No assumption is placed on the open-loop stability of the plant, and the eigenvalues of the monodromy matrix $\Phi_p := A_{p,M-1}\cdots A_{p,0}$ are allowed to lie outside the unit circle.

\subsection{Periodically time-varying controller}\label{sec:controller}

An $n_c$-th order discrete-time LPTV controller $K$ with the same period $M$ is described by
\begin{eqnarray}
x_c(k+1) &=& A_{c,k}\, x_c(k) + B_{c,k}\, e(k) \label{eq:ctrl_state}\\
u(k) &=& C_{c,k}\, x_c(k) + D_{c,k}\, e(k) \label{eq:ctrl_output}
\end{eqnarray}
where $x_c(k) \in \mathbb{R}^{n_c}$ is the controller state, $e(k) = r(k) - y(k) \in \mathbb{R}^{l}$ is the tracking error, and $r(k) \in \mathbb{R}^{l}$ is the reference signal. The system matrices satisfy $A_{c,k} = A_{c,k+M}$, $B_{c,k} = B_{c,k+M}$, $C_{c,k} = C_{c,k+M}$, and $D_{c,k} = D_{c,k+M}$ for all $k$.

\begin{ass}\label{ass:Dc_zero}
The controller has no direct feedthrough, i.e., $D_{c,k} = O_{m,l}$ for all $k$, and has relative degree one, i.e., $C_{c,k}\, B_{c,k-1}$ is nonsingular for all $k = 0, \ldots, M-1$.
\end{ass}

The zero direct feedthrough condition is standard in many practical control implementations. Note that under Assumption~\ref{ass:Dp_zero}, the controller input $e(k)$ and output $u(k)$ are of the same dimension $l = m$, so $C_{c,k}\, B_{c,k-1}$ is a square matrix and its nonsingularity is well-defined; for the SISO case ($m = l = 1$), this nonsingularity reduces to $C_{c,k}\, B_{c,k-1} \neq 0$ for all $k$. Since the controller is chosen by the designer, both conditions can be verified by construction. The relative degree one case is assumed for simplicity; the extension to relative degree $d \geq 2$, where $C_{c,k}\, A_{c,k-1}\cdots A_{c,k-d+1}\, B_{c,k-d}$ is nonsingular for all $k$, is given in Remark~\ref{rem:relative_degree_general}.

\subsection{Cyclic reformulation}\label{sec:cyclic_review}

The cyclic reformulation \cite{cyc1,cyc2} is a technique to convert an $M$-periodic LPTV system into an equivalent LTI system. The general framework is reviewed here.

Consider a general $n$-th order $M$-periodic LPTV system:
\begin{eqnarray}
x(k+1) &=& A_k\, x(k) + B_k\, u(k) \label{eq:general_lptv_state}\\
y(k) &=& C_k\, x(k) + D_k\, u(k) \label{eq:general_lptv_output}
\end{eqnarray}
where $x(k) \in \mathbb{R}^n$, $u(k) \in \mathbb{R}^{m_u}$, $y(k) \in \mathbb{R}^{m_y}$, and $A_k = A_{k+M}$, $B_k = B_{k+M}$, $C_k = C_{k+M}$, $D_k = D_{k+M}$.

The cycled input signal $\check{u}(k) \in \mathbb{R}^{Mm_u}$ is defined by
\begin{eqnarray}
\lefteqn{\check u(0) = \begin{bmatrix}u(0)\\O_{m_u,1}\\ \vdots\\ O_{m_u,1}\end{bmatrix}, \check u(1) = \begin{bmatrix}O_{m_u,1}\\u(1)\\ \vdots\\ O_{m_u,1}\end{bmatrix}, \cdots, }\label{eq:checku}\\&\check u(M\!-\!1) = \begin{bmatrix}O_{m_u,1}\\ \vdots\\O_{m_u,1}\\ u(M\!-\!1)\end{bmatrix}, \check u(M) = \begin{bmatrix}u(M)\\O_{m_u,1}\\ \vdots\\ O_{m_u,1}\end{bmatrix}, \cdots \nonumber
\end{eqnarray}
That is, $\check{u}(k)$ has a unique nonzero sub-vector $u(k)$ at the position determined by $k\,\mathrm{mod}\,M$, and all other sub-vectors are zero. The sub-vector cyclically shifts along the block components. The cycled output $\check{y}(k) \in \mathbb{R}^{Mm_y}$ is defined in the same manner.

The cyclic reformulation of (\ref{eq:general_lptv_state}), (\ref{eq:general_lptv_output}) yields the following LTI system:
\begin{eqnarray}
\check{x}(k+1) &=& \check{A}\, \check{x}(k) + \check{B}\, \check{u}(k) \label{eq:general_cycled_state}\\
\check{y}(k) &=& \check{C}\, \check{x}(k) + \check{D}\, \check{u}(k) \label{eq:general_cycled_output}
\end{eqnarray}
where $\check{x}(k) \in \mathbb{R}^{Mn}$ is the cycled state, and
\begin{equation}
     \check{A} = \begin{bmatrix}
          0 & 0 & \cdots & 0 & A_{M-1}\\
          A_{0} & 0 & \cdots & 0 & 0\\
          0 & A_{1} & \ddots & \vdots & \vdots\\
          \vdots & \ddots & \ddots & 0 & \vdots\\
          0 & \cdots & 0 & A_{M-2} & 0 
     \end{bmatrix}\label{eq:checkA_general}
\end{equation}
\begin{equation}
     \check{B} = \begin{bmatrix}
          0 & 0 & \cdots & 0 & B_{M-1}\\
          B_{0} & 0 & \cdots & 0 & 0\\
          0 & B_{1} & \ddots & \vdots & \vdots\\
          \vdots & \ddots & \ddots & 0 & \vdots\\
          0 & \cdots & 0 & B_{M-2} & 0 
     \end{bmatrix}\label{eq:checkB_general}
\end{equation}
\begin{equation}
     \check{C} = \mathrm{diag}\{C_{0},\, C_{1},\, \ldots,\, C_{M-1}\} \label{eq:checkC_general}
\end{equation}
\begin{equation}
     \check{D} = \mathrm{diag}\{D_{0},\, D_{1},\, \ldots,\, D_{M-1}\} \label{eq:checkD_general}
\end{equation}
The matrices $\check{A} \in \mathbb{R}^{Mn \times Mn}$ and $\check{B} \in \mathbb{R}^{Mn \times Mm_u}$ have the characteristic cyclic structure, while $\check{C} \in \mathbb{R}^{Mm_y \times Mn}$ and $\check{D} \in \mathbb{R}^{Mm_y \times Mm_u}$ are block-diagonal matrices. A key property of this construction is that, when the cycled input $\check{u}(k)$ is driven by (\ref{eq:checku}), the nonzero sub-vector of the cycled state $\check{x}(k)$ at time $k$ coincides with the original state $x(k)$ of the periodic system (\ref{eq:general_lptv_state}), (\ref{eq:general_lptv_output}), and the same holds for the cycled output $\check{y}(k)$ and the original output $y(k)$; thus, the LTI system (\ref{eq:general_cycled_state}), (\ref{eq:general_cycled_output}) exactly reproduces the input-output behavior of the original LPTV system within the cycled signal framework.

The Markov-parameter characterization of cycled LPTV systems is described as follows \cite{okajima_access2025}. Let $\check H(0)=\check D$ and $\check H(i)=\check C\check A^{i-1}\check B$ for $i\ge 1$ denote the Markov parameters of (\ref{eq:general_cycled_state}), (\ref{eq:general_cycled_output}), and let $\check S_q\in\mathbb{R}^{Mq\times Mq}$ be the block cyclic shift matrix whose $(i,i+1)$-th block is $I_q$ for $i=1,\ldots,M-1$, whose $(M,1)$-th block is $I_q$, and whose other blocks are zero. Define the blocks $H_{k}^{(h)}\in\mathbb{R}^{m_y\times m_u}$ for $k=0,\ldots,M-1$ and $h\ge 0$ (all subscripts mod $M$) by $H_{k}^{(0)}=D_k$ and $H_{k}^{(h)}=C_{k+h}\,A_{k+h-1}\cdots A_{k+1}\,B_{k}$ for $h\ge 1$. While \cite{okajima_access2025} states a two-sided shift identity, only the left-sided form is needed in what follows: for all $h\ge 0$,
\begin{eqnarray}
\check S_{m_y}^{\,h}\check H(h) = \mathrm{diag}\{H_{0}^{(h)},\, H_{1}^{(h)},\, \ldots,\, H_{M-1}^{(h)}\}.\label{eq:shifted_markov_sparsity}
\end{eqnarray}
This shifted-Markov-parameter sparsity is an input-output property and is therefore independent of the particular state-space realization $(\check A,\check B,\check C,\check D)$. Hence it is inherited by any exact realization, including realizations related to the cyclic form by a similarity transformation.

%================================================================================
\section{Closed-loop augmented system}\label{sec:closedloop}
%================================================================================

In this section, the closed-loop augmented system is constructed in the joint input-output framework. This augmented system is an $M$-periodic LPTV system.

\subsection{Construction of the augmented state-space model}\label{sec:cl_formulation}

Consider the feedback system where the LPTV plant $P$ in (\ref{eq:plant_state}), (\ref{eq:plant_output}) is controlled by the LPTV controller $K$ in (\ref{eq:ctrl_state}), (\ref{eq:ctrl_output}). The tracking error is $e(k) = r(k) - y(k)$, where $r(k) \in \mathbb{R}^{l}$ is the reference signal. 

Substituting $e(k) = r(k) - y(k) = r(k) - C_{p,k} x_p(k)$ (using Assumption~\ref{ass:Dp_zero} and setting $w(k) = 0$ and $v(k) = 0$ for the deterministic part) and $u(k) = C_{c,k} x_c(k)$ (using Assumption~\ref{ass:Dc_zero}) into the state equations, and defining the augmented state 
\begin{eqnarray}
\xi(k) = \begin{bmatrix} x_p(k) \\ x_c(k)\end{bmatrix} \in \mathbb{R}^{n_{cl}}, \quad n_{cl} = n_p + n_c \label{eq:augmented_state}
\end{eqnarray}
the following closed-loop augmented system is obtained:
\begin{eqnarray}
\xi(k+1) &=& A_{cl,k}\, \xi(k) + B_{cl,k}\, r(k) \label{eq:cl_state}\\
z(k) := \begin{bmatrix} y(k) \\ u(k) \end{bmatrix} &=& C_{cl,k}\, \xi(k) \label{eq:cl_output}
\end{eqnarray}
where the closed-loop system matrices are given by
\begin{eqnarray}
A_{cl,k} &=& \begin{bmatrix} A_{p,k} & B_{p,k} C_{c,k} \\ -B_{c,k} C_{p,k} & A_{c,k} \end{bmatrix} \label{eq:Acl}\\
B_{cl,k} &=& \begin{bmatrix} O_{n_p, l} \\ B_{c,k} \end{bmatrix} \label{eq:Bcl}\\
C_{cl,k} &=& \begin{bmatrix} C_{y,k} \\ C_{u,k} \end{bmatrix} = \begin{bmatrix} C_{p,k} & O_{l, n_c} \\ O_{m, n_p} & C_{c,k} \end{bmatrix} \label{eq:Ccl}
\end{eqnarray}
where $A_{cl,k} \in \mathbb{R}^{n_{cl} \times n_{cl}}$, $B_{cl,k} \in \mathbb{R}^{n_{cl} \times l}$, $C_{cl,k} \in \mathbb{R}^{(l+m) \times n_{cl}}$. Here, $z(k) \in \mathbb{R}^{l+m}$ is the joint output, $C_{y,k}$ extracts the plant output $y(k)$ from the augmented state, and $C_{u,k}$ extracts the control input $u(k)$.

The augmented system (\ref{eq:cl_state}), (\ref{eq:cl_output}) is in the standard form of an $M$-periodic LPTV system with state $\xi(k) \in \mathbb{R}^{n_{cl}}$, input $r(k) \in \mathbb{R}^{l}$, output $z(k) \in \mathbb{R}^{l+m}$, and $M$-periodic system matrices. The direct feedthrough is $D_{cl,k} = O_{(l+m),l}$ for all $k$, which follows from Assumptions~\ref{ass:Dp_zero} and~\ref{ass:Dc_zero}. This is precisely the form (\ref{eq:general_lptv_state}), (\ref{eq:general_lptv_output}) to which the cyclic reformulation can be applied.

\begin{remark}\label{rem:shared_A}
The transfer relations from $r$ to $y$ and from $r$ to $u$ share the same state dynamics $(A_{cl,k}, B_{cl,k})$ and differ only in the output matrices $C_{y,k}$ and $C_{u,k}$.
\end{remark}

\begin{ass}\label{ass:stable}
The closed-loop system is internally stable, i.e., all eigenvalues of the monodromy matrix $\Phi_{cl} := A_{cl,M-1} A_{cl,M-2} \cdots A_{cl,0}$ lie strictly inside the unit circle. Equivalently, $\check{A}_{cl}$ is Schur stable, since the nonzero eigenvalues of $\check{A}_{cl}$ are the $M$-th roots of the eigenvalues of $\Phi_{cl}$ \cite{cyc2}.
\end{ass}

This stability requirement, which replaces the open-loop plant stability assumed in the open-loop formulation of \cite{okajima_access2025}, ensures that the closed-loop signals $\{r(k),y(k),u(k)\}$ are bounded and that the cycled augmented system used in the subsequent identification is Schur stable.

\subsection{Inclusion of disturbances}\label{sec:cl_disturbances}

When the process disturbance $w(k)$ and measurement noise $v(k)$ are present, the closed-loop augmented state equation becomes
\begin{eqnarray}
\xi(k+1) &=& A_{cl,k}\, \xi(k) + B_{cl,k}\, r(k) + B_{\eta,k}\, \eta(k) \label{eq:cl_state_noise}
\end{eqnarray}
where $\eta(k) = \begin{bmatrix} w(k)^{\top} & v(k)^{\top} \end{bmatrix}^{\top} \in \mathbb{R}^{m_w + l}$ is the combined disturbance vector, and
\begin{eqnarray}
B_{\eta,k} &=& \begin{bmatrix} B_{w,k} & O_{n_p, l} \\ O_{n_c, m_w} & -B_{c,k} \end{bmatrix} \in \mathbb{R}^{n_{cl} \times (m_w + l)} \label{eq:Beta}
\end{eqnarray}
The output equation including noise is given by
\begin{eqnarray}
z(k) = \begin{bmatrix} y(k) \\ u(k) \end{bmatrix} &=& C_{cl,k}\, \xi(k) + \begin{bmatrix} I_l \\ O_{m,l} \end{bmatrix} v(k) \label{eq:cl_output_noise}
\end{eqnarray}
Here, $w(k)$ enters through the plant state, whereas $v(k)$ affects the measured output and propagates to the controller state through feedback. In the identification step, $r(k)$ is the known external input and $\eta(k)$ is treated as an unknown stochastic disturbance. The required uncorrelatedness between $r(k)$ and $\eta(k)$ holds when the reference is externally generated independently of the noise.

%================================================================================
\section{Cyclic reformulation of the augmented closed-loop system}\label{sec:cyclic_cl}
%================================================================================

In this section, the cyclic reformulation reviewed in Section~\ref{sec:cyclic_review} is applied to the augmented closed-loop system derived in Section~\ref{sec:closedloop}. The augmented system (\ref{eq:cl_state}), (\ref{eq:cl_output}) is an $M$-periodic LPTV system with state dimension $n_{cl}$, input dimension $l$, and output dimension $l+m$. Therefore, the general cyclic reformulation framework (\ref{eq:general_cycled_state})--(\ref{eq:checkD_general}) is directly applied by the substitutions $n \leftarrow n_{cl}$, $m_u \leftarrow l$, $m_y \leftarrow l+m$, $A_k \leftarrow A_{cl,k}$, $B_k \leftarrow B_{cl,k}$, $C_k \leftarrow C_{cl,k}$, and $D_k \leftarrow O$.

\subsection{Cycled closed-loop signals}\label{sec:cycled_cl_signals}

The cycled signals are constructed from the individual closed-loop signals $r(k)$, $y(k)$, and $u(k)$ separately. Specifically, the cycled reference signal $\check{r}(k) \in \mathbb{R}^{Ml}$, the cycled plant output $\check{y}(k) \in \mathbb{R}^{Ml}$, and the cycled control input $\check{u}(k) \in \mathbb{R}^{Mm}$ are each constructed in the same manner as (\ref{eq:checku}). The combined cycled output used for identification is defined by stacking $\check{y}(k)$ and $\check{u}(k)$ vertically:
\begin{eqnarray}
\check{z}(k) := \begin{bmatrix} \check{y}(k) \\ \check{u}(k) \end{bmatrix} \in \mathbb{R}^{M(l+m)} \label{eq:cycled_z}
\end{eqnarray}
Note that $\check{z}(k)$ is \emph{not} obtained by directly cycling the joint signal $z(k) = \begin{bmatrix} y(k)^{\top} & u(k)^{\top} \end{bmatrix}^{\top} \in \mathbb{R}^{l+m}$; rather, $\check{y}(k)$ and $\check{u}(k)$ are cycled individually and then stacked, so the nonzero block arrangement differs from the standard cyclic structure (\ref{eq:checku}). The $y$-channels occupy the first $Ml$ rows and the $u$-channels occupy the last $Mm$ rows of $\check{z}(k)$. This ordering is equivalent to the standard cyclic output of $z(k)$ up to a fixed row permutation, and is important for the subsequent decomposition of the identified output matrix.

\subsection{State-space representation of the cycled augmented system}\label{sec:cycled_cl_ss}

Applying the cyclic reformulation to the augmented system (\ref{eq:cl_state}), (\ref{eq:cl_output}), the following LTI system is obtained:
\begin{eqnarray}
\check{\xi}(k+1) &=& \check{A}_{cl}\, \check{\xi}(k) + \check{B}_{cl}\, \check{r}(k) \label{eq:cycled_cl_state}\\
\check{z}(k) &=& \check{C}_{cl}\, \check{\xi}(k) \label{eq:cycled_cl_output}
\end{eqnarray}
where $\check{\xi}(k) \in \mathbb{R}^{Mn_{cl}}$ is the cycled augmented state, $\check{z}(k)$ is the combined cycled output defined in (\ref{eq:cycled_z}), and the system matrices are given by
\begin{equation}
     \check{A}_{cl} = \begin{bmatrix}
          0 & 0 & \cdots & 0 & A_{cl,M-1}\\
          A_{cl,0} & 0 & \cdots & 0 & 0\\
          0 & A_{cl,1} & \ddots & \vdots & \vdots\\
          \vdots & \ddots & \ddots & 0 & \vdots\\
          0 & \cdots & 0 & A_{cl,M-2} & 0 
     \end{bmatrix}\label{eq:checkAcl}
\end{equation}
\begin{equation}
     \check{B}_{cl} = \begin{bmatrix}
          0 & 0 & \cdots & 0 & B_{cl,M-1}\\
          B_{cl,0} & 0 & \cdots & 0 & 0\\
          0 & B_{cl,1} & \ddots & \vdots & \vdots\\
          \vdots & \ddots & \ddots & 0 & \vdots\\
          0 & \cdots & 0 & B_{cl,M-2} & 0 
     \end{bmatrix}\label{eq:checkBcl}
\end{equation}
\begin{eqnarray}
\check{C}_y &=& \mathrm{diag}\{C_{y,0},\, C_{y,1},\, \ldots,\, C_{y,M-1}\} \label{eq:checkCy}\\
\check{C}_u &=& \mathrm{diag}\{C_{u,0},\, C_{u,1},\, \ldots,\, C_{u,M-1}\} \label{eq:checkCu}\\
\check{C}_{cl} &:=& \begin{bmatrix} \check{C}_y \\ \check{C}_u \end{bmatrix} \in \mathbb{R}^{M(l+m) \times Mn_{cl}} \label{eq:checkCcl}
\end{eqnarray}
where $\check{C}_y \in \mathbb{R}^{Ml \times Mn_{cl}}$ and $\check{C}_u \in \mathbb{R}^{Mm \times Mn_{cl}}$, and $\check{D}_{cl} = O$ since $D_{cl,k} = O$ for all $k$. Here, $C_{y,k} = \begin{bmatrix} C_{p,k} & O_{l,n_c} \end{bmatrix}$ and $C_{u,k} = \begin{bmatrix} O_{m,n_p} & C_{c,k} \end{bmatrix}$ as defined in (\ref{eq:Ccl}).

The dimensions of the remaining matrices are $\check{A}_{cl} \in \mathbb{R}^{Mn_{cl} \times Mn_{cl}}$ and $\check{B}_{cl} \in \mathbb{R}^{Mn_{cl} \times Ml}$. Note that $\check{A}_{cl}$ is built from the augmented closed-loop matrices $A_{cl,k}$ as a whole rather than from the plant and controller being cycled separately.

The cycled augmented system (\ref{eq:cycled_cl_state}), (\ref{eq:cycled_cl_output}) is an LTI system with $Ml$ inputs and $Ml + Mm$ outputs, and the state dimension is $Mn_{cl} = M(n_p + n_c)$. The transfer functions from $\check{r}$ to $\check{y}$ and from $\check{r}$ to $\check{u}$,
\begin{eqnarray}
\check{T}_{yr}(z) &=& \check{C}_{y}(zI - \check{A}_{cl})^{-1} \check{B}_{cl} \label{eq:Tyf_cycled}\\
\check{T}_{ur}(z) &=& \check{C}_{u}(zI - \check{A}_{cl})^{-1} \check{B}_{cl} \label{eq:Tuf_cycled}
\end{eqnarray}
share the same state dynamics $(\check{A}_{cl}, \check{B}_{cl})$ and differ only in the output matrices $\check{C}_y$ and $\check{C}_u$. This shared-$A$ structure is inherited from the original augmented system (Remark~\ref{rem:shared_A}) and preserved by cyclic reformulation.

\subsection{Inclusion of disturbances in the cycled system}\label{sec:cycled_disturbances}

When disturbances are present, the cyclic reformulation of (\ref{eq:cl_state_noise}), (\ref{eq:cl_output_noise}) yields
\begin{eqnarray}
\check{\xi}(k+1) &=& \check{A}_{cl}\, \check{\xi}(k) + \check{B}_{cl}\, \check{r}(k) + \check{B}_{\eta}\, \check{\eta}(k) \label{eq:cycled_cl_state_noise}\\
\begin{bmatrix} \check{y}(k) \\ \check{u}(k) \end{bmatrix} &=& \begin{bmatrix} \check{C}_y \\ \check{C}_u \end{bmatrix} \check{\xi}(k) + \begin{bmatrix} \check{D}_{vy} \\ O \end{bmatrix} \check{v}(k) \label{eq:cycled_cl_output_noise}
\end{eqnarray}
where $\check{\eta}(k)$ and $\check{v}(k)$ are the cycled disturbance signals, $\check{B}_{\eta}$ has the cyclic structure with sub-blocks $B_{\eta,k}$, and $\check{D}_{vy} = \mathrm{diag}\{I_l, \ldots, I_l\}$. In the identification procedure, $\check{r}(k)$ is the known input and the disturbance terms are treated as stochastic noise.

\begin{ass}\label{ass:obsv}
The cycled closed-loop realization $(\check A_{cl},\check B_{cl},\check C_{cl})$ is minimal, i.e., the pair $(\check A_{cl},\check B_{cl})$ is reachable and the pair $(\check C_{cl},\check A_{cl})$ is observable.
\end{ass}

Assumption~\ref{ass:obsv} concerns the cycled closed-loop interconnection $(\check A_{cl},\check B_{cl},\check C_{cl})$ as a whole, and is not implied by the plant-side controllability and observability assumed in Section~\ref{sec:plant}: the input enters only through $\check B_{cl}$, which involves $B_{c,k}$ but not $B_{p,k}$, and the output is the joint signal driven by the block-diagonal $\check C_{cl}$. The condition can be checked directly in computation by the rank of the reachability and observability matrices of $(\check A_{cl},\check B_{cl},\check C_{cl})$. Its role is to ensure that the nominal order $Mn_{cl}$ used in the subspace identification step coincides with the minimal order of the cycled closed-loop transfer matrix. When it fails, the same procedure still applies with $Mn_{cl}$ replaced by the effective minimal order, which is read off as a clear gap in the Hankel singular values.

%================================================================================
\section{Subspace identification and plant extraction}\label{sec:identification}
%================================================================================

This section formulates the closed-loop identification problem, applies subspace identification to the cycled signals, and then extracts the LPTV plant parameters from the identified closed-loop realization by algebraic formulas that are independent of any particular subspace identification algorithm.

\subsection{Identification problem}\label{sec:id_problem}

The corresponding open-loop problem --- recovering the LPTV plant parameters from open-loop input-output data $\{u(k), y(k)\}$ via cyclic reformulation and subspace identification --- has been solved in \cite{okajima_access2025}, where the open-loop data collection requires the plant to be open-loop stable. The closed-loop counterpart treated here generates the data under feedback, where the closed-loop system is internally stable by Assumption~\ref{ass:stable} but the plant itself may be open-loop unstable. This shifts the stability requirement for data generation from the open-loop plant to the closed-loop system.

\begin{prob}\label{prob:cl_id}
Given the closed-loop data $\{r(k), y(k), u(k)\}_{k=0}^{N-1}$, the period $M$, the plant order $n_p$, and the controller state-space realization $(A_{c,k}, B_{c,k}, C_{c,k}, D_{c,k})$, recover the LPTV plant parameters $A_{p,k}$, $B_{p,k}$, $C_{p,k}$, $D_{p,k}$ ($k = 0, \ldots, M-1$) of the plant (\ref{eq:plant_state}), (\ref{eq:plant_output}).
\end{prob}

The controller realization is used to specify the augmented order $Mn_{cl}=M(n_p+n_c)$ and to verify the structural assumptions, while the plant extraction formulas derived in Section~\ref{sec:plant_extraction} use only the identified shared closed-loop realization $(\A,\B,\C)$ with the partition $\C=[\C_y^{\top}\;\C_u^{\top}]^{\top}$.

The recovery target --- the LPTV plant parameters $\{A_{p,k}, B_{p,k}, C_{p,k}, D_{p,k}\}$ --- is the same as in the open-loop formulation of \cite{okajima_access2025}, but the input data are now the reference $r(k)$ rather than the manipulated input $u(k)$. The reference $r(k)$ is treated as a user-designed external signal that is sufficiently exciting for the cycled augmented system to be identified (e.g., a broadband signal such as a white noise sequence), and is uncorrelated with the process and measurement noises. As in the open-loop case, subspace identification determines the state-space realization only up to a similarity transformation, so the recovered LPTV parameters are determined up to a periodic coordinate transformation, which is an inherent degree of freedom in state-space identification.

\subsection{Application of subspace identification}\label{sec:subspace_id}

The closed-loop experiment proceeds in the original time domain: the external reference $r(k)$ is applied to the feedback loop, and the control input $u(k)$ and plant output $y(k)$ are recorded. From these recorded signals alone, the cycled signals $\check r(k)$ and $\check z(k)=[\check y(k)^{\top}\;\check u(k)^{\top}]^{\top}$ are constructed as defined in (\ref{eq:cycled_z}); this construction is a signal-level rearrangement and does not require any model knowledge. By the result of Section~\ref{sec:cyclic_cl}, the map from $\check r$ to $\check z$ generated by these signals is the LTI system (\ref{eq:cycled_cl_state}), (\ref{eq:cycled_cl_output}); the data $\{\check r(k),\check z(k)\}$ are therefore amenable to standard LTI subspace identification methods such as N4SID~\cite{n4sid} or MOESP~\cite{moesp}, applied with $\check r(k)\in\mathbb{R}^{Ml}$ as input and $\check z(k)\in\mathbb{R}^{M(l+m)}$ as output, under the constraint $\D=O$ implied by Assumptions~\ref{ass:Dp_zero} and~\ref{ass:Dc_zero}. The identified model is denoted by $(\A,\B,\C)$. This use of cyclic reformulation as a preprocessing step for subspace identification follows the open-loop formulation of~\cite{okajima_access2025}, while the joint stacking of $\check y$ and $\check u$ into $\check z$ is specific to the present closed-loop setting.

\begin{remark}\label{rem:similarity}
When the true cycled closed-loop realization is minimal and the identified realization is an exact minimal realization of the same deterministic transfer matrix, the two realizations are related by a similarity transformation $T_s \in \mathbb{R}^{Mn_{cl} \times Mn_{cl}}$:
\begin{eqnarray}
\A = T_s^{-1} \check{A}_{cl}\, T_s, \quad \B = T_s^{-1} \check{B}_{cl}, \quad \C = \begin{bmatrix} \check{C}_y \\ \check{C}_u \end{bmatrix} T_s . \label{eq:similarity}
\end{eqnarray}
This similarity relationship is used only when properties of the true structured realization, such as the nonsingularity of $\check C_u\check B_{cl}$, are transferred to the identified coordinates. The plant extraction formulas themselves do not require computing $T_s$.
\end{remark}

\subsection{Decomposition of the identified output matrix}\label{sec:output_decomp}

Since the cycled output $\check{z}(k) = \begin{bmatrix} \check{y}(k)^{\top} & \check{u}(k)^{\top} \end{bmatrix}^{\top}$ is constructed by stacking $\check{y}(k)$ (first $Ml$ rows) and $\check{u}(k)$ (last $Mm$ rows), the identified output matrix $\C$ can be partitioned accordingly:
\begin{eqnarray}
\C = \begin{bmatrix} \C_y \\ \C_u \end{bmatrix} \label{eq:Cstar_decomp}
\end{eqnarray}
where $\C_y \in \mathbb{R}^{Ml \times Mn_{cl}}$ corresponds to the first $Ml$ rows (the $\check{y}$-output) and $\C_u \in \mathbb{R}^{Mm \times Mn_{cl}}$ corresponds to the last $Mm$ rows (the $\check{u}$-output). The two sub-matrices $\C_y$ and $\C_u$ share the same state dynamics $(\A, \B)$, so the transfer functions from $\check{r}$ to $\check{y}$ and from $\check{r}$ to $\check{u}$ are given by
\begin{eqnarray}
\check{T}_{yr}(z) &=& \C_y(zI - \A)^{-1} \B \label{eq:Tyf}\\
\check{T}_{ur}(z) &=& \C_u(zI - \A)^{-1} \B \label{eq:Tuf}
\end{eqnarray}
This shared-$A$ structure is the fundamental property that enables the plant extraction without state dimension increase. Note that $\C_y$ and $\C_u$ are in general dense matrices (not block-diagonal), but this does not affect the extraction procedure, which relies solely on the input-output behavior.

\subsection{Plant extraction from the identified closed-loop model}\label{sec:plant_extraction}

Once the closed-loop model $(\A, \B, \C)$ with the output partition $\C=[\C_y^{\top}\;\C_u^{\top}]^{\top}$ in (\ref{eq:Cstar_decomp}) is identified by the procedure of Section~\ref{sec:subspace_id} with the joint stacked output $\check z(k)$, the cycled plant model can be extracted by exploiting the shared-$A$ structure. Here, $\check{P}(z)$ denotes the transfer function of the LTI system obtained by applying cyclic reformulation to the original LPTV plant. Thus, although the original plant is periodically time-varying and does not admit an LTI transfer function in the usual sense, the input-output relation between the cycled signals $\check{u}$ and $\check{y}$ is represented by the LTI transfer function $\check{P}(z)$. The key observation is that this $\check{P}(z)$ satisfies
\begin{eqnarray}
\check{P}(z) = \check{T}_{yr}(z)\, \check{T}_{ur}(z)^{-1} \label{eq:G_extraction}
\end{eqnarray}
This relation follows from the preservation of the input-output structure under cyclic reformulation. In the original time domain, the plant maps $u$ to $y$ through (\ref{eq:plant_state})--(\ref{eq:plant_output}), which implies that the cycled signals satisfy $\check{y}(k) = \check{P}(z)\,\check{u}(k)$ in the $z$-domain. Since the closed-loop transfer functions from $\check{r}$ to $\check{y}$ and from $\check{r}$ to $\check{u}$ satisfy $\check{T}_{yr}(z) = \check{P}(z)\,\check{T}_{ur}(z)$, one obtains (\ref{eq:G_extraction}) provided $\check{T}_{ur}(z)$ admits a right inverse as a rational matrix; since $\check{T}_{ur}(z)$ is strictly proper, this right invertibility is realized concretely through the proper transfer function $\check{T}_{ur}^{(1)}(z)$, whose nonsingular direct feedthrough $\C_u\B$ (cf.\ Remark~\ref{rem:CuB_structure}) ensures pointwise invertibility for almost all $z$.
Since $\check{T}_{yr}$ and $\check{T}_{ur}$ share the same state dynamics $(\A, \B)$ and $\D = O$, the direct computation of (\ref{eq:G_extraction}) can be performed without increasing the state dimension, as shown below.

Since $\D = O$, the transfer functions $\check{T}_{yr}(z)$ and $\check{T}_{ur}(z)$ are both strictly proper, i.e., their zeroth Markov parameters vanish. Define the one-step advanced transfer functions
\begin{eqnarray}
\check{T}_{yr}^{(1)}(z) &:=& z\,\check{T}_{yr}(z) \nonumber\\
&=& \C_y(zI - \A)^{-1}\A\B + \C_y\B \label{eq:shifted_Tyf}\\
\check{T}_{ur}^{(1)}(z) &:=& z\,\check{T}_{ur}(z) \nonumber\\
&=& \C_u(zI - \A)^{-1}\A\B + \C_u\B \label{eq:shifted_Tuf}
\end{eqnarray}
which are proper transfer functions with state-space realizations $(\A,\, \A\B,\, \C_y,\, \C_y\B)$ and $(\A,\, \A\B,\, \C_u,\, \C_u\B)$, respectively. These share the same state dynamics $(\A, \A\B)$. Under Assumption~\ref{ass:Dc_zero}, the direct feedthrough $\C_u\B$ of $\check{T}_{ur}^{(1)}(z)$ is nonsingular (cf.\ Remark~\ref{rem:CuB_structure}), so $\check{T}_{ur}^{(1)}(z)$ is invertible and the relation $\check{P}(z) = \check{T}_{yr}(z)\,\check{T}_{ur}(z)^{-1} = \check{T}_{yr}^{(1)}(z)\,[\check{T}_{ur}^{(1)}(z)]^{-1}$ holds since the common factor $z$ cancels.

\begin{remark}\label{rem:CuB_structure}
The relative degree one condition in Assumption~\ref{ass:Dc_zero} is connected to the cycled system as follows. The matrix $\check{C}_u \check{B}_{cl}$ is block-cyclic: for $k=0,\ldots,M-1$, its only nonzero block in the $k$-th block column is $C_{u,k}B_{cl,k-1}=C_{c,k}B_{c,k-1}$, located at the $(k+1)$-th block row. Hence, the nonsingularity of $C_{c,k}B_{c,k-1}$ for all $k=0,\ldots,M-1$ ensures that $\check{C}_u \check{B}_{cl}$ is nonsingular, and therefore so is $\C_u\B$ in the exact realization up to similarity. This is precisely the condition required for the inverse system construction used in the plant extraction.
\end{remark}

Under Assumption~\ref{ass:Dc_zero}, define $\Lambda := (\C_u\B)^{-1}$. Then, the cycled plant model $\check{P}(z)$ admits the following $Mn_{cl}$-th order realization:
\begin{eqnarray}
\A_p &=& \A - \A\B \Lambda\, \C_u \label{eq:Ag}\\
\B_p &=& \A\B \Lambda \label{eq:Bg}\\
\C_p &=& \C_y(I - \B \Lambda\, \C_u) \label{eq:Cg}\\
\D_p &=& \C_y\, \B \Lambda \label{eq:Dg}
\end{eqnarray}
where (\ref{eq:Cg}) is equivalently written as $\C_p = \C_y - \D_p\C_u$ by using (\ref{eq:Dg}).

\begin{theorem}\label{thm:plant_extraction}
Suppose that $(\A,\B,\C)$ with $\D=O$ realizes the joint cycled closed-loop map from $\check r$ to $\check z=[\check y^{\top}\;\check u^{\top}]^{\top}$, with $\C$ partitioned as $\C=[\C_y^{\top}\;\C_u^{\top}]^{\top}$ in (\ref{eq:Cstar_decomp}) so that $\check T_{yr}(z)$ and $\check T_{ur}(z)$ are given by (\ref{eq:Tyf})--(\ref{eq:Tuf}), and that $\C_u\B$ is nonsingular. Then $(\A_p,\B_p,\C_p,\D_p)$ defined by (\ref{eq:Ag})--(\ref{eq:Dg}) is a state-space realization of $\check P(z)$, i.e.,
\begin{eqnarray}
\check{P}(z) = \C_p(zI - \A_p)^{-1}\B_p + \D_p. \nonumber
\end{eqnarray}
\end{theorem}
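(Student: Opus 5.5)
The plan is to split the argument into two parts. First, confirm that $\check P(z)=\check T^{(1)}_{yr}(z)\,[\check T^{(1)}_{ur}(z)]^{-1}$ as rational matrices. Second, realize this product by the standard inverse-system and cascade formulas, using the shared state $(\A,\A\B)$ so that no state augmentation is needed.

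For the first part, take as given the identity $\check T_{yr}(z)=\check P(z)\check T_{ur}(z)$ stated before the theorem. It follows from the plant relation $\check y=\check P\check u$ holding for every closed-loop trajectory generated by $\check r$ from zero initial state. Multiplying by $z$ gives $\check T^{(1)}_{yr}=\check P\,\check T^{(1)}_{ur}$. Since $\check T^{(1)}_{ur}(z)\to\C_u\B$ as $z\to\infty$ and $\C_u\B$ is nonsingular by hypothesis, $\det \check T^{(1)}_{ur}(z)$ is a rational function that is not identically zero. Hence $\check T^{(1)}_{ur}$ is invertible as a rational matrix, and $\check P=\check T^{(1)}_{yr}[\check T^{(1)}_{ur}]^{-1}$.

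For the second part, write $G_u:=\check T^{(1)}_{ur}$, with realization $(\A,\A\B,\C_u,\C_u\B)$ and $\Lambda=(\C_u\B)^{-1}$. The textbook inverse formula gives a realization of $G_u^{-1}$:
\begin{equation*}
(\A-\A\B\Lambda\C_u,\ \A\B\Lambda,\ -\Lambda\C_u,\ \Lambda),
\end{equation*}
so that $G_u^{-1}$ has state matrix $\A_p$ and input matrix $\B_p$. To avoid doubling the state, I will argue directly with signals rather than forming the series connection. Take the inverse system with input $\check u$ and state $\zeta$:
\begin{equation*}
\zeta(k+1)=\A_p\zeta(k)+\B_p\check u(k),\qquad \check r^{(1)}(k)=-\Lambda\C_u\zeta(k)+\Lambda\check u(k).
\end{equation*}
This is exactly the system $\zeta(k+1)=\A\zeta(k)+\A\B\,\check r^{(1)}(k)$, $\check u(k)=\C_u\zeta(k)+\C_u\B\,\check r^{(1)}(k)$, solved for $\check r^{(1)}$. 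The same state $\zeta$, driven by $\check r^{(1)}$ through $(\A,\A\B)$, also produces the corresponding $y$-channel output of $G_y:=\check T^{(1)}_{yr}$:
\begin{equation*}
\C_y\zeta+\C_y\B\,\check r^{(1)}=\C_y\zeta+\C_y\B\Lambda(\check u-\C_u\zeta)=\C_p\zeta+\D_p\check u.
\end{equation*}
Hence the map $\check u\mapsto\check y$, with zero initial state, is $(\A_p,\B_p,\C_p,\D_p)$.

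To make this rigorous in the transfer-function domain, I will take $z$-transforms with zero initial conditions and eliminate $\zeta$ and $\check r^{(1)}$. Concretely, I will verify the identity
\begin{equation*}
\bigl[\C_p(zI-\A_p)^{-1}\B_p+\D_p\bigr]\,G_u(z)=G_y(z)
\end{equation*}
by writing $G_u=\C_u(zI-\A)^{-1}\A\B+\C_u\B$ and using $zI-\A=(zI-\A_p)-\A\B\Lambda\C_u$, which is the push-through step. Right-multiplying by $G_u^{-1}$ then gives the claim.

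I expect the main obstacle to be this algebraic identity, that is, keeping the resolvent manipulation clean. The planned route is to set $X(z):=(zI-\A_p)^{-1}\B_p$ and $W(z):=(zI-\A)^{-1}\A\B$. The key relation is $X\,G_u=W$, which follows by multiplying through by $zI-\A_p$ and substituting $\A_p=\A-\A\B\Lambda\C_u$. Then
\begin{equation*}
\C_pXG_u+\D_pG_u=\C_yW-\C_y\B\Lambda\C_uW+\C_y\B\Lambda G_u,
\end{equation*}
and $G_u-\C_uW=\C_u\B$, so the last two terms combine to $\C_y\B$. The result is $\C_yW+\C_y\B=G_y$. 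No stability of $\A_p$ is used anywhere; the argument requires only the nonsingularity of $\C_u\B$.
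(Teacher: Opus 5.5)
Your proof is correct, but it reaches the realization by a different route than the paper. The paper first forms the textbook series connection of $[\check T^{(1)}_{ur}]^{-1}$ and $\check T^{(1)}_{yr}$, which has state dimension $2Mn_{cl}$. It then applies the similarity $T=\begin{bmatrix}I&0\\ I&I\end{bmatrix}$, which block-diagonalizes the cascade into $\mathrm{diag}\{\A_p,\A\}$ with input matrix $[\B_p^{\top}\;0]^{\top}$, and discards the unreachable $\A$-block.

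You never form the doubled state. Instead you verify directly the rational identity $\bigl[\C_p(zI-\A_p)^{-1}\B_p+\D_p\bigr]\check T^{(1)}_{ur}(z)=\check T^{(1)}_{yr}(z)$. You do this through the push-through relation $(zI-\A_p)^{-1}\B_p\,\check T^{(1)}_{ur}(z)=(zI-\A)^{-1}\A\B$, which I checked. You then cancel $\check T^{(1)}_{ur}$ using its invertibility as a rational matrix, justified via $\det\check T^{(1)}_{ur}(z)\to\det(\C_u\B)\neq 0$ as $z\to\infty$; the paper only asserts this. Your observation that a single state $\zeta$ serves both the inverse system and the $y$-channel is the paper's $\eta_2=\xi_2-\xi_1\equiv 0$, seen without the intermediate cascade.

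Your route is shorter. Its final verification does not need the inverse-system realization, and it makes clear that only nonsingularity of $\C_u\B$ is used, with no stability of $\A_p$. The paper's route is a few lines longer but exhibits the $Mn_{cl}$-dimensional unreachable subspace explicitly. That is what supports its pole-zero cancellation interpretation in Remark~\ref{rem:Ag_stability} and its claim that no state augmentation is needed.
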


\begin{proof}
The proof proceeds in four steps: (i) construct the standard realization of the inverse system $[\check{T}_{ur}^{(1)}(z)]^{-1}$, (ii) form the standard cascade realization of $\check P(z)=\check{T}_{yr}^{(1)}(z)\,[\check{T}_{ur}^{(1)}(z)]^{-1}$ of state dimension $2Mn_{cl}$, (iii) introduce a similarity transformation that block-diagonalizes the cascade and decouples its two halves, and (iv) identify an $Mn_{cl}$-dimensional unreachable subspace, leaving the realization (\ref{eq:Ag})--(\ref{eq:Dg}).

\emph{Step 1 (Inverse-system realization).}
Since $\check{T}_{ur}^{(1)}(z)$ has the realization $(\A,\A\B,\C_u,\C_u\B)$ with nonsingular direct feedthrough $\C_u\B$ (guaranteed by Assumption~\ref{ass:Dc_zero} and Remark~\ref{rem:CuB_structure}), the inverse system admits the standard realization \cite{katayama_book}
\begin{eqnarray}
[\check{T}_{ur}^{(1)}(z)]^{-1} \sim (\A-\A\B\Lambda\C_u,\, \A\B\Lambda,\, -\Lambda\C_u,\, \Lambda),\nonumber
\end{eqnarray}
whose state and input matrices coincide with $\A_p$ and $\B_p$ in (\ref{eq:Ag})--(\ref{eq:Bg}).

\emph{Step 2 (Cascade realization).}
Let $\check\nu$ be the external input to the cascade realization of $\check{T}_{yr}^{(1)}(z)\,[\check{T}_{ur}^{(1)}(z)]^{-1}$, which corresponds to the cycled plant input. The signal flow is $\check\nu\to[\check T_{ur}^{(1)}(z)]^{-1}\to\check T_{yr}^{(1)}(z)\to\check y$. Let $\xi_1\in\mathbb R^{Mn_{cl}}$ be the state of $[\check T_{ur}^{(1)}(z)]^{-1}$ and $\xi_2\in\mathbb R^{Mn_{cl}}$ the state of $\check T_{yr}^{(1)}(z)$. The standard series-connection rule yields the $2Mn_{cl}$-dimensional cascade realization $(\A_{\rm cas},\B_{\rm cas},\C_{\rm cas},\D_{\rm cas})$ with
\begin{eqnarray}
\A_{\rm cas}&=&\begin{bmatrix}\A_p & 0\\ -\A\B\Lambda\C_u & \A\end{bmatrix},\quad
\B_{\rm cas}=\begin{bmatrix}\B_p\\ \B_p\end{bmatrix},\nonumber\\[2pt]
\C_{\rm cas}&=&\begin{bmatrix}-\C_y\B\Lambda\C_u & \C_y\end{bmatrix},\quad
\D_{\rm cas}=\C_y\B\Lambda=\D_p.\nonumber
\end{eqnarray}
Here, the $(2,1)$-block $-\A\B\Lambda\C_u$ of $\A_{\rm cas}$ couples the output $-\Lambda\C_u\,\xi_1$ of the inverse system into the input channel of $\check T_{yr}^{(1)}(z)$ via the input matrix $\A\B$; the input matrix has identical blocks because the inverse-system input matrix is $\B_p=\A\B\Lambda$ and the direct feedthrough of the inverse system is $\Lambda$, which feeds $\check\nu$ to $\check T_{yr}^{(1)}(z)$ through $\A\B\Lambda=\B_p$.

\emph{Step 3 (Block-diagonalizing similarity).}
Define the similarity transformation $\xi=T\eta$ with
\begin{eqnarray}
T=\begin{bmatrix}I & 0\\ I & I\end{bmatrix},\quad
T^{-1}=\begin{bmatrix}I & 0\\ -I & I\end{bmatrix},\nonumber
\end{eqnarray}
so that $\eta_1=\xi_1$ and $\eta_2=\xi_2-\xi_1$. Direct computation, using $\A=\A_p+\A\B\Lambda\C_u$ from (\ref{eq:Ag}), gives
\begin{eqnarray}
T^{-1}\A_{\rm cas}T = \begin{bmatrix}\A_p & 0\\ 0 & \A\end{bmatrix},\;
T^{-1}\B_{\rm cas} = \begin{bmatrix}\B_p\\ 0\end{bmatrix},&&\nonumber\\[2pt]
\C_{\rm cas}T = \begin{bmatrix}\C_y(I - \B\Lambda\C_u) & \C_y\end{bmatrix} = \begin{bmatrix}\C_p & \C_y\end{bmatrix},&&\nonumber
\end{eqnarray}
where the last equality uses (\ref{eq:Cg}).

\emph{Step 4 (Mode cancellation).}
In the new coordinates, the $\eta_2$-subsystem has zero input matrix and is therefore unreachable from $\check\nu$. Hence $\eta_2(k)\equiv 0$ under zero initial conditions, and the input-output behavior of the cascade is realized by the $\eta_1$-subsystem alone:
\begin{eqnarray}
\eta_1(k+1) &=& \A_p\,\eta_1(k) + \B_p\,\check\nu(k),\nonumber\\
\check y(k) &=& \C_p\,\eta_1(k) + \D_p\,\check\nu(k).\nonumber
\end{eqnarray}
This is precisely the realization (\ref{eq:Ag})--(\ref{eq:Dg}), so $\check P(z)=\C_p(zI-\A_p)^{-1}\B_p+\D_p$. The realization has state dimension $Mn_{cl}$, despite the fact that the naive cascade has dimension $2Mn_{cl}$; the reduction by $Mn_{cl}$ is exactly the $Mn_{cl}$-dimensional unreachable subspace identified above.
\end{proof}

The realization (\ref{eq:Ag})--(\ref{eq:Dg}) has state dimension $Mn_{cl}$, the same as that of the identified shared closed-loop realization $(\A,\B,\C)$; hence the algebraic extraction does not require any state augmentation, despite the fact that the naive cascade realization constructed in the proof has dimension $2Mn_{cl}$.

Theorem~\ref{thm:plant_extraction} is stated as a purely algebraic result: it assumes only that $(\A,\B,\C)$ with $\D=O$ realizes the joint cycled closed-loop map and that $\C_u\B$ is nonsingular, and does not require this realization to be minimal. In the closed-loop LPTV setting of Sections~\ref{sec:closedloop}--\ref{sec:cyclic_cl}, the relation $\check T_{yr}=\check P\check T_{ur}$ used implicitly in the theorem follows from the plant input-output relation under cyclic reformulation, as discussed around (\ref{eq:G_extraction}). Assumptions~\ref{ass:Dp_zero} and~\ref{ass:Dc_zero} are structural conditions ensuring $\D=O$ and the nonsingularity of $\C_u\B$ (via Remark~\ref{rem:CuB_structure}) for an exact realization similar to the true cycled realization, and thereby supply the algebraic hypotheses of Theorem~\ref{thm:plant_extraction}. Assumptions~\ref{ass:stable} and~\ref{ass:obsv}, together with the excitation of $r(k)$ stated in Section~\ref{sec:id_problem}, are sufficient conditions under which the required exact shared realization is obtained as the ideal limit of the finite-data identification step, shifting the stability requirement from the open-loop plant to the closed-loop data-generating system.

\begin{remark}\label{rem:Ag_stability}
The proof of Theorem~\ref{thm:plant_extraction} is purely algebraic and does not require $\A_p$ to be stable. This is a crucial property: when the plant is open-loop unstable, $\A_p$ may have eigenvalues outside the unit circle, yet the transfer function identity $\check{P}(z) = \C_p(zI - \A_p)^{-1}\B_p + \D_p$ and all Markov parameters are preserved exactly. The mechanism is pole-zero cancellation in the shared-$A$ structure: $\check{T}_{yr}^{(1)}(z)$ has poles at the eigenvalues of $\A$, the inverse system $[\check{T}_{ur}^{(1)}(z)]^{-1}$ has transmission zeros at the same eigenvalues, and their product cancels all $\A$-modes exactly, leaving a realization whose poles are determined solely by $\A_p$. Hence, when the plant is open-loop unstable, $\A$ retains all eigenvalues inside the unit circle (reflecting the stable closed-loop dynamics) while $\A_p$ may have unstable eigenvalues, and the proposed method is applicable to open-loop unstable plants without any modification.
\end{remark}

\begin{remark}\label{rem:Dg_zero}
The direct term $\D_p=\C_y\B\Lambda$ in (\ref{eq:Dg}) is retained as a general expression valid for any shared-$A$ realization with strictly proper $\check T_{yr}(z)$ and $\check T_{ur}(z)$; under Assumptions~\ref{ass:Dp_zero} and~\ref{ass:Dc_zero} it is structurally zero in the exact case, as shown below, but is kept in the formula so that the proof of Theorem~\ref{thm:plant_extraction} remains self-contained and the extraction step is well-defined under finite-data perturbation. Assumption~\ref{ass:Dp_zero} ($D_{p,k}=O$) is essential for the particular closed-loop output decomposition used in (\ref{eq:Ccl}). If $D_{p,k}\neq O$, then
\begin{eqnarray}
y(k)=C_{p,k}x_p(k)+D_{p,k}C_{c,k}x_c(k)
\end{eqnarray}
under Assumption~\ref{ass:Dc_zero}, so the $y$-output block of the augmented closed-loop realization would be $[C_{p,k}\; D_{p,k}C_{c,k}]$ rather than $[C_{p,k}\; O]$. The extraction formulas in (\ref{eq:Ag})--(\ref{eq:Dg}) are therefore stated for strictly proper plants. Under Assumptions~\ref{ass:Dp_zero} and~\ref{ass:Dc_zero}, $\check C_y\check B_{cl}=O$ structurally; hence $\C_y\B=O$ in an exact similar realization and $\D_p=\C_y\B\Lambda=O$, so the extracted plant $\check P(z)$ is strictly proper. In finite-data noisy identification, however, $\C_y\B$ and $\C_u\B$ are perturbed. Thus, the direct term $\D_p=\C_y\B\Lambda$ and the matrices in (\ref{eq:Ag})--(\ref{eq:Dg}) are sensitive to the conditioning of $\C_u\B$. In particular, the extraction step should be regarded as reliable only when the estimated matrix $\C_u\B$ is nonsingular and reasonably well-conditioned. By standard matrix perturbation theory, the error in $\Lambda=(\C_u\B)^{-1}$ is bounded in terms of $\|(\C_u\B)^{-1}\|$ multiplied by the perturbation $\|\Delta(\C_u\B)\|$, so the conditioning of $\C_u\B$ governs the local sensitivity of the extraction step.
\end{remark}

\begin{remark}\label{rem:relative_degree_general}
The construction extends to controller relative degree $d \geq 2$ by using the shifted transfer functions $\check{T}_{yr}^{(d)}(z) := z^d\,\check{T}_{yr}(z)$ and $\check{T}_{ur}^{(d)}(z) := z^d\,\check{T}_{ur}(z)$, and replacing $\A\B$, $\C_y\B$, and $\C_u\B$ in (\ref{eq:Ag})--(\ref{eq:Dg}) with $\A^d\B$, $\C_y\A^{d-1}\B$, and $\C_u\A^{d-1}\B$, respectively. The nonsingularity of $\C_u\A^{d-1}\B$ follows from the relative degree $d$ condition in Assumption~\ref{ass:Dc_zero} via the cyclic structure of $\check{C}_u \check{A}_{cl}^{d-1}\check{B}_{cl}$, and the proof of Theorem~\ref{thm:plant_extraction} carries over by the same cancellation mechanism.
\end{remark}

\subsection{Reduction and recovery of LPTV plant parameters}\label{sec:order_reduction}

The realization $(\A_p, \B_p, \C_p, \D_p)$ obtained in Theorem~\ref{thm:plant_extraction} has dimension $Mn_{cl}=M(n_p+n_c)$, while the cycled plant transfer matrix $\check{P}(z)$ has minimal order $Mn_p$ when the original LPTV plant is minimal in the periodic sense. The extra $Mn_c$ modes correspond to the controller dynamics that cancel in the product $\check{T}_{yr}(z)\check{T}_{ur}(z)^{-1}$. In the exact (infinite-data) case these modes appear with zero Hankel singular values; with finite noisy data they appear as small but nonzero singular values, and a clear gap between the $Mn_p$-th and $(Mn_p+1)$-th singular values serves as a numerical indicator that the reduction to order $Mn_p$ is well-posed. Balanced truncation \cite{overschee_book} reduces the realization to order $Mn_p$; the procedure remains applicable when $\A_p$ has unstable eigenvalues (Remark~\ref{rem:Ag_stability}), since the cancellable modes that need to be removed correspond to the controller dynamics and lie in the stable part of the realization. The reduced realization is again denoted by $(\A_p, \B_p, \C_p, \D_p)$, with $\C_p\in\mathbb{R}^{Ml\times Mn_p}$ for the square case $m=l$.

By Theorem~\ref{thm:plant_extraction}, the reduced realization realizes $\check P(z)$, so its Markov parameters $\D_p$ and $\C_p\A_p^{i-1}\B_p$ ($i\geq 1$) coincide with those of the true cycled plant in the exact case and inherit the shifted block-diagonal sparsity recalled in Section~\ref{sec:cyclic_review}; under finite-data noise the sparsity holds only approximately. This inheritance can also be verified algebraically by induction on the Markov-parameter index using $\check P(z)\check T_{ur}(z)=\check T_{yr}(z)$, the sparsity of $\C_y\A^{i-1}\B$ and $\C_u\A^{i-1}\B$, and the cyclic structure of $\Lambda$ from Remark~\ref{rem:CuB_structure}. The coordinate transformation
\begin{eqnarray}
T^{-1} = \sum_{j=1}^{n_p} \check{F}_j \check{S}_l^{\,j-1} \C_p\, \A_p^{\,j-1} \label{eq:T_recovery}
\end{eqnarray}
then converts the realization into cyclic reformulation form, where $\check S_l$ is the block cyclic shift matrix defined in Section~\ref{sec:cyclic_review} and $\check F_j=\mathrm{diag}\{F_j,\ldots,F_j\}\in\mathbb{R}^{Mn_p\times Ml}$ is built from a common block $F_j\in\mathbb{R}^{n_p\times l}$ following \cite{okajima_access2025}, with $F_j$ chosen so that $T$ is nonsingular. The construction of $T^{-1}$ in (\ref{eq:T_recovery}) and the resulting extraction of the LPTV parameters from the cyclic-form realization are adopted from the open-loop parameter recovery procedure of \cite{okajima_access2025}, which exploits the shifted block-diagonal sparsity of the cycled Markov parameters recalled in Section~\ref{sec:cyclic_review}. For SISO plants, taking $F_j=e_j$ yields the LPTV plant in observable canonical form. From the resulting cyclic-form realization $(\check{A}_p,\check{B}_p,\check{C}_p,\check{D}_p)$, the LPTV parameters $A_{p,k}$, $B_{p,k}$, $C_{p,k}$, $D_{p,k}$ are read off from the block components. The recovery step is thus algebraic in the exact case and a structure-guided numerical reconstruction under finite-data noise, whose consistency is checked through Markov-parameter errors and the conditioning of $T$.

\subsection{Identification algorithm}\label{sec:algorithm}

The complete identification procedure is summarized as follows.

\begin{algorithm}[!bth]
\caption{Closed-Loop Identification for LPTV Systems}
\label{algo:cl_id}
\begin{algorithmic}
\State [1.] Collect the closed-loop data $\{r(k), y(k), u(k)\}_{k=0}^{N-1}$.
\State [2.] Construct the cycled signals: cycled input $\check{r}(k) \in \mathbb{R}^{Ml}$ from $r(k)$, cycled plant output $\check{y}(k) \in \mathbb{R}^{Ml}$ from $y(k)$, and cycled control input $\check{u}(k) \in \mathbb{R}^{Mm}$ from $u(k)$. Form the combined cycled output $\check{z}(k) = \begin{bmatrix} \check{y}(k)^{\top} & \check{u}(k)^{\top} \end{bmatrix}^{\top} \in \mathbb{R}^{M(l+m)}$.
\State [3.] Apply a subspace identification method (e.g., N4SID) to the cycled input $\check{r}(k)$ and cycled output $\check{z}(k)$ with the model order $Mn_{cl} = M(n_p + n_c)$ and $\D = O$.
\State [4.] Obtain the identified state-space model $(\A, \B, \C)$ and decompose $\C$ into $\C_y$ and $\C_u$.
\State [5.] Verify that $\C_u\B$ is nonsingular and compute $\Lambda = (\C_u\B)^{-1}$.
\State [6.] Compute the cycled plant realization $(\A_p, \B_p, \C_p, \D_p)$ using (\ref{eq:Ag})--(\ref{eq:Dg}).
\State [7.] Reduce $(\A_p, \B_p, \C_p, \D_p)$ from order $Mn_{cl}$ to the cycled plant order $Mn_p$ by balanced truncation; the cancellable modes targeted by the reduction are stable and the procedure applies regardless of whether $\A_p$ has eigenvalues outside the unit circle (Remark~\ref{rem:Ag_stability}). Continue to denote the reduced realization by $(\A_p, \B_p, \C_p, \D_p)$.
\State [8.] Select block-diagonal matrices $\check F_j$ in (\ref{eq:T_recovery}), construct $T^{-1}$, verify that $T$ is nonsingular, and apply the coordinate transformation $T$ of \cite{okajima_access2025} to convert the reduced realization $(\A_p, \B_p, \C_p, \D_p)$ into cyclic reformulation form $(\check A_p, \check B_p, \check C_p, \check D_p)$.
\State [9.] Extract the LPTV parameters $A_{p,k}, B_{p,k}, C_{p,k}, D_{p,k}$ ($k=0,\ldots,M-1$) from the block components of the cyclic matrices $(\check A_p, \check B_p, \check C_p, \check D_p)$ according to (\ref{eq:checkA_general})--(\ref{eq:checkD_general}).
\end{algorithmic}
\end{algorithm}

In Algorithm~\ref{algo:cl_id} and Section~\ref{sec:order_reduction}, the matrices $(\A,\B,\C)$ with the partition $\C=[\C_y^{\top}\;\C_u^{\top}]^{\top}$, $(\A_p,\B_p,\C_p,\D_p)$ (before and after order reduction), the coordinate transformation $T$, the cyclic-form realization $(\check A_p,\check B_p,\check C_p,\check D_p)$, and the recovered LPTV parameters $A_{p,k},B_{p,k},C_{p,k},D_{p,k}$ all denote the quantities computed from finite noisy data; in the numerical examples of Section~\ref{sec:simulation} the recovered LPTV parameters are written as $\hat A_{p,k},\hat B_{p,k},\hat C_{p,k},\hat D_{p,k}$ to make the comparison with the true LPTV parameters explicit. The corresponding exact (infinite-data) quantities defined in Sections~\ref{sec:plant_extraction}--\ref{sec:order_reduction} are the limits of these data-based quantities under Assumptions~\ref{ass:stable} and~\ref{ass:obsv} together with the excitation of $r(k)$ stated in Section~\ref{sec:id_problem}, and the algebraic extraction step is locally continuous in a neighborhood of the exact realization, as discussed in Remark~\ref{rem:Dg_zero}. The only stochastic estimation step is the subspace identification of the cycled closed-loop realization, for which the employed method is assumed to provide an estimate converging to an exact minimal realization under its standard assumptions; conditional on this convergence, the subsequent extraction and recovery steps are algebraic. The decoration $\check{}$ on $(\check A_p,\check B_p,\check C_p,\check D_p)$ refers to the cyclic block structure of the realization, in line with the notation used for the true cycled closed-loop matrices such as $\check A_{cl}$.

%================================================================================
\section{Numerical simulation}\label{sec:simulation}
%================================================================================

The proposed closed-loop identification approach is demonstrated through three numerical examples: a stable SISO plant (Example~1), an open-loop unstable SISO plant (Example~2), and a MIMO plant (Example~3). The source code reproducing the numerical results is available at \cite{code_repo}. In all examples, the N4SID method with MOESP weighting~\cite{n4sid} is used with $\D = O$, and the reference signal $r(k)$ is generated as an external white-noise sequence independent of the process and measurement noises. Noise-free results reproduce Theorem~\ref{thm:plant_extraction} at machine precision, while SNR $=40$ dB cases illustrate the same algebraic step applied to a realization estimated from finite noisy data. Example~1 reports numerical values in the text; Examples~2 and~3 summarize the corresponding results in tables. Since the recovered LPTV parameters are determined only up to a periodic similarity transformation, the Markov-parameter error is the primary, coordinate-invariant validation criterion throughout. In the SISO examples, the true plant is given in observable canonical form for $(A_{p,k},C_{p,k})$ and $\check F_j$ in (\ref{eq:T_recovery}) is chosen so that the recovery step reproduces this canonical form, fixing the periodic similarity ambiguity and making the entrywise errors of $(\hat A_{p,k},\hat B_{p,k},\hat C_{p,k})$ a well-defined supplementary diagnostic; the MIMO case (Section~\ref{sec:sim_mimo}) uses only coordinate-invariant criteria.

Two evaluation metrics are used throughout. The model fit, for a signal $\zeta(k)$ and its estimate $\hat{\zeta}(k)$ over a validation horizon, is $\mathrm{fit}[\%] = 100\left(1 - \|\zeta - \hat{\zeta}\|_2 / \|\zeta - \bar{\zeta}\|_2\right)$, where $\bar{\zeta}$ is the mean of $\zeta$. The closed-loop (CL) fit and open-loop (OL) fit denote this measure applied to the joint output $z = [y^{\top}\; u^{\top}]^{\top}$ and the plant output $y$, respectively, averaged over the output channels. The maximum Markov parameter error is $\max_{0 \leq h \leq h_{\max}} \|H(h) - \hat{H}(h)\|_F$ with $h_{\max} = 15$, where $H(h)$ and $\hat{H}(h)$ denote the $h$-th Markov parameter of the cycled plant and its estimate, respectively.

\subsection{Example 1: Stable SISO LPTV plant}\label{sec:sim_setup}

Consider the $2$nd-order SISO LPTV plant ($n_p = 2$, $m = l = 1$, $M = 3$) from \cite{okajima_access2025}:
\begin{eqnarray}
&A_{p,0} = \begin{bmatrix}0&1\\0.5&1\end{bmatrix},\; A_{p,1} = \begin{bmatrix}0&1\\0.9&-0.95\end{bmatrix},\; A_{p,2} = \begin{bmatrix}0&1\\1&0.5\end{bmatrix}, \nonumber \\
&B_{p,0} = \begin{bmatrix}1\\2\end{bmatrix},\; B_{p,1} = \begin{bmatrix}1.5\\2\end{bmatrix},\; B_{p,2} = \begin{bmatrix}1\\0.5\end{bmatrix}, \nonumber \\
&C_{p,k} = \begin{bmatrix}1&0\end{bmatrix},\quad D_{p,k} = 0, \quad k = 0, 1, 2 \nonumber
\end{eqnarray}
A time-invariant controller ($n_c = 1$) with $A_{c,k} = 0.30$, $B_{c,k} = 0.80$, $C_{c,k} = 0.05$, $D_{c,k} = 0$ stabilizes the closed-loop system ($\max|\mathrm{eig}(\Phi_{cl})| = 0.819$). The relative degree one condition in Assumption~\ref{ass:Dc_zero} holds since $C_{c,k}\, B_{c,k-1} = 0.04 \neq 0$ for all $k$. The cycled augmented system has state dimension $Mn_{cl} = 9$, input dimension $Ml = 3$, and output dimension $M(l+m) = 6$.

Using $N = 3000$ data points with SNR $= 40$ dB, the N4SID method with MOESP weighting achieved a closed-loop fit of $99.4\%$. The Hankel singular values of the extracted plant exhibit a clear gap at $Mn_p = 6$ ($\sigma_6 = 2.32$, $\sigma_7 = 3.17 \times 10^{-15}$), confirming that the $Mn_c = 3$ cancellable modes are numerically separated from the $Mn_p = 6$ plant modes. The cancellable modes are removed by balanced truncation, and the LPTV parameters are recovered via \cite{okajima_access2025} with $\check F_j$ chosen consistently with the observable canonical form of $(A_{p,k},C_{p,k})$, yielding an LPTV open-loop fit of $99.998\%$. Under this canonical alignment, the Frobenius-norm errors of the recovered LPTV matrices, averaged over $k=0,1,2$, are $\|A_{p,k}-\hat A_{p,k}\|_F=O(10^{-6})$, $\|B_{p,k}-\hat B_{p,k}\|_F=O(10^{-5})$, and $\|C_{p,k}-\hat C_{p,k}\|_F=O(10^{-16})$. The direct term is fixed at $\hat D_{p,k}=0$ by the strict-properness structure (Assumption~\ref{ass:Dp_zero}); the extraction quantity $\C_y\B\Lambda$ has Frobenius norm $O(10^{-2})$ under SNR $=40$ dB and decreases to machine precision in the noise-free case, consistent with Remark~\ref{rem:Dg_zero}. The noise-free part verifies Theorem~\ref{thm:plant_extraction} at machine precision, and the SNR $=40$ dB results show that Algorithm~\ref{algo:cl_id} solves Problem~\ref{prob:cl_id} for this stable SISO LPTV plant with the structural assumptions of the present framework.

\subsection{Example 2: Open-loop unstable SISO LPTV plant}\label{sec:sim_unstable}

To demonstrate the key advantage of the proposed method, an open-loop unstable LPTV plant is considered. The plant ($n_p = 2$, $m = l = 1$, $M = 3$) has matrices:
\begin{eqnarray}
&A_{p,0} = \begin{bmatrix}0&1\\0.8&1.2\end{bmatrix},\; A_{p,1} = \begin{bmatrix}0&1\\1.1&-0.5\end{bmatrix},\; A_{p,2} = \begin{bmatrix}0&1\\0.9&0.8\end{bmatrix}, \nonumber \\
&B_{p,0} = \begin{bmatrix}1\\2\end{bmatrix},\; B_{p,1} = \begin{bmatrix}1.5\\1\end{bmatrix},\; B_{p,2} = \begin{bmatrix}1\\1.5\end{bmatrix}, \nonumber \\
&C_{p,k} = \begin{bmatrix}1&0\end{bmatrix},\quad D_{p,k} = 0, \quad k = 0, 1, 2 \nonumber
\end{eqnarray}
The monodromy matrix eigenvalues are $-0.501$ and $1.581$, whose magnitudes exceed one for the second eigenvalue, confirming open-loop instability. The open-loop identification method of \cite{okajima_access2025} is therefore inapplicable.

A periodically time-varying controller ($n_c = 1$) with $A_{c,0} = 0.20$, $A_{c,1} = -0.50$, $A_{c,2} = 0.50$, $B_{c,k} = 0.80$, $C_{c,k} = 0.30$, $D_{c,k} = 0$ stabilizes the closed-loop system ($\max|\mathrm{eig}(\Phi_{cl})| = 0.564$). The relative degree one condition in Assumption~\ref{ass:Dc_zero} is verified as $C_{c,k}\, B_{c,k-1} = 0.24 \neq 0$ for all $k$.

Table~\ref{tab:unstable_results} summarizes the identification results using $N = 5000$ data points. In the noise-free case, the maximum Markov parameter error is at machine precision, and the extracted plant realization has $\max|\mathrm{eig}(\A_p)|=1.165$, matching $1.581^{1/3}\approx 1.1647$, the $M$-th root of the unstable monodromy eigenvalue of the true plant, consistent with Theorem~\ref{thm:plant_extraction} and Remark~\ref{rem:Ag_stability}. At SNR $= 40$ dB, the CL fit remains above $99\%$ and the Markov error degrades to $O(10^{-5})$, showing that the algebraic extraction is robust to moderate noise. The condition number of $\C_u\B$ stays at unity in both cases, indicating that the extraction step is numerically well posed.

\begin{table}[h]
\centering
\caption{Identification results for the unstable plant (Example~2)}
\label{tab:unstable_results}
\begin{tabular}{ccc}
\hline
 & SNR $= \infty$ & SNR $= 40$ dB \\
\hline
CL fit [\%] & $100.0$ & $99.3$ \\
Max Markov error & $3.78 \times 10^{-14}$ & $5.35 \times 10^{-5}$ \\
$\mathrm{cond}(\C_u\B)$ & $1.00$ & $1.00$ \\
\hline
\end{tabular}
\end{table}

The extracted plant realization has $Mn_{cl}=9$ states, of which three eigenvalues lie outside the unit circle (corresponding to the unstable plant modes) and the remaining six are inside the unit circle. The Hankel singular values exhibit a clear gap with six values at $O(10^0)$ and three at $O(10^{-16})$, the latter corresponding to the $Mn_c=3$ cancellable controller modes; combined with the three unstable modes, this recovers the expected plant order $Mn_p = 6$. Balanced truncation removes the cancellable modes, and the LPTV parameters are recovered with the same observable-canonical-form-aligned choice of $\check F_j$ as in Example~1. The entrywise recovery errors at SNR $=40$ dB are reported in Table~\ref{tab:unstable_lptv}. The results confirm that an open-loop unstable cycled plant realization is recovered from stable closed-loop data: the open-loop method of \cite{okajima_access2025} is inapplicable to this plant, while the proposed framework solves Problem~\ref{prob:cl_id} with the algebraic extraction of Theorem~\ref{thm:plant_extraction} remaining valid in the presence of unstable plant modes, as predicted by Remark~\ref{rem:Ag_stability}.

\begin{table}[h]
\centering
\caption{LPTV parameter recovery errors for the unstable plant (Example~2, SNR $= 40$ dB)}
\label{tab:unstable_lptv}
\begin{tabular}{cccc}
\hline
$k$ & $\|A_{p,k} - \hat{A}_{p,k}\|_F$ & $\|B_{p,k} - \hat{B}_{p,k}\|_F$ & $\|C_{p,k} - \hat{C}_{p,k}\|_F$ \\
\hline
$0$ & $1.64 \times 10^{-6}$ & $6.15 \times 10^{-6}$ & $0.00$ \\
$1$ & $8.53 \times 10^{-7}$ & $4.78 \times 10^{-6}$ & $2.78 \times 10^{-17}$ \\
$2$ & $1.31 \times 10^{-6}$ & $2.27 \times 10^{-6}$ & $2.78 \times 10^{-17}$ \\
\hline
\end{tabular}
\end{table}

\subsection{Example 3: MIMO LPTV plant}\label{sec:sim_mimo}

A $3$rd-order, $2$-input $2$-output LPTV plant ($n_p = 3$, $m = l = 2$, $M = 3$) is considered. The plant is constructed from a stable LTI base by periodically varying the $(3,3)$ entry of $A_p$:
\begin{eqnarray}
&A_{p,k} = \begin{bmatrix} 0 & 1 & 0 \\ 0 & 0 & 1 \\ -0.3 & -0.5 & a_{33,k} \end{bmatrix}, \nonumber \\
&(a_{33,0},\, a_{33,1},\, a_{33,2}) = (0.2,\, -0.1,\, 0.4)  \nonumber
\end{eqnarray}
\begin{eqnarray}
&B_{p,0} = \begin{bmatrix} 1.0 & 0.0 \\ 0.5 & 1.0 \\ 0.0 & 0.5 \end{bmatrix},\;
B_{p,1} = \begin{bmatrix} 1.0 & 0.2 \\ 0.3 & 1.0 \\ 0.1 & 0.4 \end{bmatrix},\;
B_{p,2} = \begin{bmatrix} 0.8 & 0.1 \\ 0.6 & 0.8 \\ 0.0 & 0.6 \end{bmatrix} \nonumber
\end{eqnarray}
with $C_{p,k} = \begin{bmatrix} 1 & 0 & 0 \\ 0 & 1 & 0 \end{bmatrix}$ and $D_{p,k} = O_{2,2}$. The plant monodromy matrix has $\max|\mathrm{eig}(\Phi_p)| = 0.629$. A $2$nd-order MIMO LPTV controller ($n_c = 2$) stabilizes the closed-loop system ($\max|\mathrm{eig}(\Phi_{cl})| = 0.862$). The relative degree one condition in Assumption~\ref{ass:Dc_zero} is verified: $\det(C_{c,k}\, B_{c,k-1}) \neq 0$ for all $k$, with condition numbers below $8$.

Table~\ref{tab:mimo_results} summarizes the results using $N = 9000$ data points. In the noise-free case, the Markov parameter error is at $O(10^{-14})$, and the Hankel singular values exhibit a clear gap at $Mn_p = 9$, confirming numerical separation of the $Mn_c = 6$ cancellable modes from the $Mn_p = 9$ plant modes. The exact extraction therefore extends to the MIMO case as predicted by Theorem~\ref{thm:plant_extraction}. At SNR $= 40$ dB, the CL fit is $93.8\%$ and the Markov error degrades to $O(10^{-2})$, reflecting the larger augmented dimension ($Mn_{cl} = 15$) and the increased number of parameters to be estimated from finite data, while the closed-loop model still captures the dominant input-output behavior.

\begin{table}[h]
\centering
\caption{Identification results for the MIMO plant (Example~3)}
\label{tab:mimo_results}
\begin{tabular}{ccc}
\hline
 & SNR $= \infty$ & SNR $= 40$ dB \\
\hline
CL fit [\%] & $100.0$ & $93.8$ \\
Max Markov error & $1.36 \times 10^{-14}$ & $5.93 \times 10^{-2}$ \\
\hline
\end{tabular}
\end{table}

The Markov-parameter error in Table~\ref{tab:mimo_results} serves as the coordinate-invariant primary criterion, while the CL fit complements it by confirming that the identified shared-$A$ closed-loop model represents the data-generating map from $r$ to $[y^{\top}\;u^{\top}]^{\top}$. These results confirm that the proposed framework extends to MIMO LPTV plants and solves Problem~\ref{prob:cl_id} in the coordinate-invariant sense, with the algebraic extraction of Theorem~\ref{thm:plant_extraction} remaining valid for the increased augmented dimension $Mn_{cl}=15$.

%================================================================================
\section{Conclusion}\label{sec:conclusion}
%================================================================================

This paper studied closed-loop identification of LPTV plants through cyclic reformulation, with emphasis on open-loop unstable plants for which open-loop experiments are not viable. The main result is an exact algebraic plant-extraction theorem: from a shared-$A$ realization of the cycled closed-loop maps from $\check r$ to $\check y$ and $\check u$, the cycled plant transfer matrix is recovered as $\check P(z)=\check T_{yr}(z)\check T_{ur}(z)^{-1}=\check{T}_{yr}^{(1)}(z)\,[\check{T}_{ur}^{(1)}(z)]^{-1}$ without increasing the state dimension, where the one-step-advanced transfer functions $\check{T}_{yr}^{(1)}$ and $\check{T}_{ur}^{(1)}$ provide the proper realizations used for the algebraic construction. The proof is purely algebraic, based on inverse-system realization and exact cancellation in the shared-$A$ structure, and remains valid when the recovered plant realization is unstable. Combined with subspace identification of the cycled closed-loop signals, this yields Algorithm~\ref{algo:cl_id}, which solves Problem~\ref{prob:cl_id} under Assumptions~\ref{ass:Dp_zero}--\ref{ass:obsv}.

The framework shifts the stability requirement for data generation from the open-loop plant to the internally stable closed-loop system, while retaining structural assumptions for the algebraic inversion: a square strictly proper plant, an invertible first Markov parameter of the controller path, and a minimal cycled closed-loop realization. The dependence of the extraction step on the conditioning of the inverse controller path was made explicit. This combination of the cyclic reformulation, the joint input-output closed-loop structure, and the shared-$A$ algebraic extraction is distinct from LTI closed-loop subspace identification, which does not exploit periodic structure; from LPV subspace identification, in which parameter variation is driven by an external scheduling signal rather than by time periodicity; from open-loop LPTV identification, which presupposes an open-loop stable plant; and from lifted closed-loop multirate frequency-response identification, which targets nonparametric frequency-response estimation rather than a finite-dimensional state-space LPTV plant model.

Numerical examples confirmed the exact extraction and the finite-data implementation for stable, open-loop unstable, and MIMO LPTV plants. The open-loop unstable example demonstrated the central feature of the proposed framework, namely that an unstable cycled plant realization is recovered from stable closed-loop data with the algebraic extraction remaining valid in the presence of unstable plant modes; the MIMO case confirmed that the framework extends to the increased augmented dimension and was validated by coordinate-invariant Markov-parameter comparisons. Future work will address relaxation of the structural assumptions, including extensions to non-square and non-strictly-proper plants, and a finite-sample noise-sensitivity analysis of the proposed extraction.

\section*{Declaration of Generative AI and AI-assisted technologies in the writing process}
During the preparation of this work the author used Claude (Anthropic) to refine English expressions. The author reviewed and edited the content and takes full responsibility for the manuscript.

%================================================================================
\section*{Funding}
This research did not receive any specific grant from funding agencies in the public, commercial, or not-for-profit sectors.

%================================================================================

\end{document}